\definecolor{linkColor}{RGB}{27, 82, 140}
\renewcommand{\epsilon}{\varepsilon}
\renewcommand{\phi}{\varphi}
\DeclareMathOperator{\Exp}{\mathbb{E}}
\newcommand{\maxcut}{\textsf{\textup{Max-Cut}}}
\newcommand{\maxdicut}{\textsf{\textup{Max-DiCut}}}
\newcommand{\maxand}{\textsf{\textup{Max-And}}}
\newcommand{\maxandeven}{\textsf{\textup{Max-And-Even}}}
\newcommand{\maxdicutcut}{\textsf{\textup{Max-DiCut-Cut}}}
\newcommand{\maxdicutacyc}{\textsf{\textup{Max-DiCut-Acyclic}}}
\newcommand{\cutorient}{\textsf{\textup{Cut-Orientation}}}
\newcommand{\maxPCSP}{\textsf{\textup{Max-PCSP}}}
\newclass{\UGC}{UGC}
\declaretheorem[shaded={bgcolor=linkColor!5}]{theorem}
\declaretheorem[sibling=theorem, shaded={bgcolor=linkColor!5}]{lemma, corollary, proposition}
\theoremstyle{definition}
\declaretheorem[sibling=theorem, shaded={bgcolor=linkColor!5}]{definition}
\author{Tamio-Vesa Nakajima\\ University of Oxford\\ \texttt{tamio-vesa.nakajima@cs.ox.ac.uk} \and Stanislav \v{Z}ivn\'y\\ University of Oxford\\ \texttt{standa.zivny@cs.ox.ac.uk}}
\title{Maximum And-~vs.~Even-SAT\thanks{An extended abstract of this work will
appear in Proceedings of APPROX 2025. This work was supported by UKRI EP/X024431/1 and by a Clarendon Fund Scholarship. For the purpose of Open Access, the authors have applied a CC BY public copyright licence to any Author Accepted Manuscript version arising from this submission. All data is provided in full in the results section of this paper.}}
\date{\today}
\begin{document}

\maketitle

\begin{abstract}
A multiset of literals, called a clause, is \emph{strongly satisfied} by an
  assignment if \emph{no} literal evaluates to false. Finding an assignment that
  maximises the number of strongly satisfied clauses is \NP-hard. We present a
  simple algorithm that finds, given a multiset of clauses that admits an assignment
  that strongly satisfies  $\rho$ of the clauses, an assignment in
  which at least $\rho$ of the clauses are \emph{weakly satisfied}, in
  the sense that an \emph{even} number of literals evaluate to false.

In particular, this implies an efficient algorithm for finding an undirected cut
  of value $\rho$ in a graph $G$ given that a directed cut of value $\rho$ in $G$ is promised to exist. A similar argument also gives an efficient algorithm for finding an acyclic subgraph of $G$ with $\rho$ edges under the same promise.
\end{abstract}

\section{Introduction}\label{sec:intro}

The Maximum Cut problem in \emph{undirected} graphs (\maxcut) is a fundamental
problem, seeking a partition of the vertex set into two parts while maximising
the number of edges going across. While \maxcut{} is
\NP-hard~\cite{Karp1972}, a random assignment leads to a $\frac{1}{2}$-approximation
algorithm. In their influential work, Goemans and Williamson gave the first
improvement and presented an SDP-based $\alpha_{\text{GW}}$-approximation
algorithm~\cite{GW95}, where $\alpha_{\text{GW}}\approx 0.878$.
Under Khot's Unique Games Conjecture (\UGC)~\cite{Khot02stoc}, this
approximation factor is optimal~\cite{KKMO07,Mossel10:ann}. The current best
known inapproximability result not relying on \UGC{} is $\frac{16}{17}\approx
0.941$~\cite{Trevisan00:sicomp} (obtained by a gadget from H{\aa}stad's optimal
inapproximability result~\cite{Hastad01}).

The Maximum Cut problem in \emph{directed} graphs (\maxdicut) is a closely
related and well-studied \NP-hard problem, seeking a partition of the vertex
set into two parts while maximising the number of edges going across in the
prescribed direction. A random assignment leads to a $\frac{1}{4}$-approximation
algorithm. In the first improvement over the random assignment, Goemans and
Williamson presented an SDP-based $0.796$-approximation algorithm~\cite{GW95}.
By considering a stronger SDP formulation (with triangle inequalities), Feige
and Goemans later presented a $0.859$-approximation algorithm for
\maxdicut~\cite{FG95}, building on the work of Feige and Lov\'asz~\cite{FL92}.
Follow-up works by Matuura and Matsui~\cite{Matuura2003new} and by Lewin,
Livnat, and Zwick~\cite{Lewin02:ipco} further improved the approximation factor,
obtaining a $0.874$-approximation algorithm~\cite{Lewin02:ipco}.
On the hardness side, the best inapproximability factor under \NP-hardness is
$\frac{12}{13}\approx 0.923$~\cite{Trevisan00:sicomp} (again, via a gadget from a result
in~\cite{Hastad01}).
In recent work, Brakensiek, Huang, Potechin, and Zwick gave an
$0.8746$-approximation algorithm for \maxdicut, also showing that it is \UGC{}-hard
to achieve a $0.875$-approximation~\cite{Brakensiek23:focs}.

\paragraph{Our results.}
Consider the following \emph{promise} variant of the two discussed problems:

\begin{definition}[$\maxdicutcut$]
Given a
directed graph that has a directed cut of value $\rho$, efficiently find
an undirected cut (i.e., ignoring edge directions) of value at
least $\rho$.
\end{definition}

It turns out that the $\maxdicutcut$ admits an efficient algorithm, as will follow from our
more general result (cf.~\Cref{thm:main} below).

We represent the Boolean \emph{true} value by $+1$ and the \emph{false} value by $-1$. 
A \emph{literal} $sx$ is a variable $x$ ($s=1$, positive sign) or its
negation $-x$ ($s=-1$, negative sign).
A \emph{clause} $C=\{s_1x_1,\ldots,s_k x_k\}$ is a multiset of literals. An assignment
of $+1$s and $-1$s to the variables of a clause $C$ \emph{strongly
satisfies} $C$ if no literal evaluates to false, and \emph{weakly satisfies} $C$
if an even number of literals evaluates to false.

We now define a natural variant of the Boolean satisfiability problem.

\begin{definition}[$\maxandeven$]
Given a
multiset of clauses for which there is an assignment strongly satisfying 
$\rho$ of the clauses, find an assignment that weakly satisfies 
$\rho$ of the clauses.
\end{definition}

Coming back to $\maxdicutcut$, a directed edge $(u,v)$ is
cut if the clause $\{-u,v\}$ is strongly satisfied, and the edge is cut (ignoring
the direction) if the clause $\{-u,v\}$ is weakly satisfied. Thus, $\maxandeven$ is a generalisation of $\maxdicutcut$.

\medskip
Our main result is an algorithm for $\maxandeven$, and thus also for $\maxdicutcut$.

\begin{restatable}{theorem}{thmMain}\label{thm:main}
There exists a polynomial-time algorithm for $\maxandeven$.
\end{restatable}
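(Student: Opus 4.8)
First I would record the algebraic meaning of weak satisfaction. Writing an assignment as $y\colon V\to\{0,1\}$ with $\sigma(x)=(-1)^{y(x)}$, the number of falsified literals of a clause $C$ has parity $b_C+\sum_x p_C(x)\,y(x)$, where $p_C(x)$ is the multiplicity of $x$ in $C$ taken modulo $2$ and $b_C$ is the number of negative literals of $C$ taken modulo $2$; hence $C$ is weakly satisfied by $y$ if and only if the single $\mathbb{F}_2$-linear equation $\langle p_C,y\rangle=b_C$ holds. For a clause $C$ in which no variable occurs with both signs, strong satisfaction is the far more rigid condition that $y$ agree on all of $\mathrm{vars}(C)$ with the partial assignment $z_C$ dictated by the signs of $C$, and $z_C$ in particular solves that linear equation, so strong satisfaction implies weak satisfaction; in particular the target value is well defined. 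A routine first step is then to normalise the instance with this dictionary -- discard clauses that can never be weakly satisfied, collapse literal multiplicities modulo $2$, and so on -- so that a clause is effectively the pair $(p_C,b_C)$ together with the bookkeeping needed to recover $z_C$.

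The real content of the theorem is that maximum satisfiability of a system of $\mathbb{F}_2$-linear equations is itself hard to approximate, so merely knowing that $\rho m$ of the equations $\langle p_C,y\rangle=b_C$ are jointly satisfiable is worthless; the argument has to exploit that the promised witness $\sigma^\ast$ satisfies them in the rigid ``all literals true'' way, not just as linear equations. So the plan is constructive and certifying: exhibit a polynomial-time procedure producing one assignment $y$, and prove that the set of clauses weakly satisfied by $y$ is at least as large as the set of clauses strongly satisfied by \emph{any} fixed $\sigma^\ast$ -- so that $\rho$ is never computed, only matched. The procedure should build a specific assignment out of the incidence structure of the instance (for \maxdicutcut{} this amounts to local data about the in- and out-neighbourhoods of vertices), with a careful rule for resolving ambiguous variables; reading that assignment back as an undirected cut, or for \maxdicutacyc{} as a vertex ordering, should then give the two corollaries immediately.

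The hard part will be this certification, since one is showing that a polynomial-time-computable assignment reaches a value that is \NP-hard even to evaluate, which leaves no slack in the argument. Two phenomena make it delicate and rule out the obvious candidates. First, the accounting must be genuinely global: a strongly-satisfying $\sigma^\ast$ can ``cut'' plenty of clauses both of whose variables lie on the same side of our partition, so any comparison carried out clause-type by clause-type loses. Second, several natural shortcuts provably undershoot -- the basic linear-programming relaxation can round to a constant assignment on instances such as the directed triangle, local search for the undirected objective can stall at half of the clauses even when all of them are strongly satisfiable (take an orientation of a complete bipartite graph), and a plain majority-of-signs assignment already fails on small weighted directed multigraphs -- so the procedure and its analysis must be designed in tandem. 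The natural way to organise the analysis is an explicit injection: to each clause strongly satisfied by $\sigma^\ast$ associate a distinct clause weakly satisfied by the output, with the construction above (and its tie-breaking rule) guaranteeing the map is both well defined and injective. Pushing this back through the normalisation of the first step then yields \Cref{thm:main}.
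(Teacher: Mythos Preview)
Your write-up is a plan, not a proof: you never actually produce the assignment, and the ``explicit injection'' from clauses strongly satisfied by an unknown $\sigma^\ast$ to clauses weakly satisfied by your output is asserted to exist but never constructed. Since the injection would have to work simultaneously for \emph{every} optimal $\sigma^\ast$, and since you yourself note that natural candidates (majority-of-signs, local search) fail, this is precisely the step where the content lives, and it is missing.

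More seriously, you dismiss the linear-programming route on the grounds that the basic relaxation ``can round to a constant assignment on instances such as the directed triangle''. That is exactly the approach the paper takes, and it works. The point you are missing is that the LP relaxation of \eqref{eq:thing} over $[-1,1]$ always has a \emph{half-integral} optimum $\tilde c\colon V\to\{-1,0,+1\}$ (this is \Cref{thm:solver}), and one then rounds each $0$ to $\pm1$ independently at random. For a clause $\{s_1x_1,\ldots,s_kx_k\}$ (with each variable appearing at most once), the rounded value of $\frac12+\frac12\prod_i s_i c^*(x_i)$ has expectation at least $\frac12+\frac12\min_i s_i\tilde c(x_i)$: if the minimum is $+1$ no variable was $0$; if it is $-1$ there is nothing to prove; and if it is $0$ some factor is uniform $\pm1$, so the product has mean $0$. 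Summing and derandomising by conditional expectations finishes. On your directed-triangle example the LP optimum is $3/2$, the all-zero solution rounds randomly to an undirected cut of expected size $3/2$, and the promised dicut has value only $1$, so there is no failure. The half-integrality plus the multiplicative form of weak satisfaction is the idea your outline is lacking.
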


\noindent
\Cref{thm:main} has an interesting corollary for a related, and in some sense dual, problem.

\begin{definition}[\cutorient{}]
Given an undirected graph $G$ that has a
maximum cut of value $\rho$, orient the edges of $G$ so that the
resulting directed graph has a directed cut of maximum possible value.
\end{definition}

\cutorient{} can be approximated with the ratio $\alpha_{\text{GW}}$: Find a cut in
$G$ of size $\alpha_{\text{GW}} \rho$ using the Goemans-Williamson
algorithm~\cite{GW95}, then orient all the edges from one side of the cut to the other. 
Interestingly, \Cref{thm:main} shows that this approximation is \UGC{}-optimal.

\begin{corollary}
It is \UGC{}-hard to approximate \cutorient{} with a factor better than $\alpha_{\text{GW}}$.
\end{corollary}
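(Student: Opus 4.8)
The plan is to reduce the \UGC-hard approximation of \maxcut{} to the approximation of \cutorient, with \Cref{thm:main} doing the essential work. Assume towards a contradiction that some polynomial-time algorithm $A$ approximates \cutorient{} within a factor $\beta>\alpha_{\text{GW}}$. First I would record the elementary observation that for an undirected graph $G$ the optimum value of \cutorient{} equals the value $\rho$ of a maximum cut of $G$: on the one hand every directed cut of every orientation of $G$ is also an undirected cut, so the optimum is at most $\rho$; on the other hand, taking a maximum cut $(S,\bar S)$ and orienting every edge across it from $S$ to $\bar S$ (orienting the remaining edges arbitrarily) produces an orientation with a directed cut of value $\rho$.

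Now build a \maxcut{} algorithm as follows. Given $G$ with maximum cut value $\rho$, run $A$ on $G$ to obtain an orientation $\vec G$ of $G$ whose maximum directed cut has value at least $\beta\rho$. Then $\vec G$ is a directed graph promised to contain a directed cut of value at least $\beta\rho$, i.e.\ a valid instance of \maxdicutcut; since \maxandeven{} generalises \maxdicutcut, \Cref{thm:main} finds in polynomial time an undirected cut of $\vec G$ of value at least $\beta\rho$. As the underlying undirected graph of $\vec G$ is exactly $G$, this is a cut of $G$ of value at least $\beta\rho$, so we have obtained a polynomial-time $\beta$-approximation algorithm for \maxcut.

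Finally I would invoke the \UGC-hardness of \maxcut~\cite{KKMO07,Mossel10:ann}: under the \UGC, for every $\epsilon>0$ there is no polynomial-time $(\alpha_{\text{GW}}+\epsilon)$-approximation algorithm for \maxcut. Instantiating this with $\epsilon:=\beta-\alpha_{\text{GW}}>0$ contradicts the algorithm constructed above, which proves the corollary. The argument is short, and the only step needing care is the bookkeeping of ``value'' and ``approximation factor'' for \cutorient: one must use a normalisation (e.g.\ fraction of edges) consistent across \maxcut, \cutorient{} and \maxdicutcut, and one must note that $A$ returns only an orientation --- extracting a large directed cut from it would itself be a \maxdicut-type problem --- so it is genuinely \Cref{thm:main} that is needed to turn the promised directed cut of $\vec G$ into an actual undirected cut of the same value.
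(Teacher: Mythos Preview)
Your proof is correct and follows essentially the same approach as the paper's: reduce \maxcut{} to \cutorient{} by composing a hypothetical $\beta$-approximation for \cutorient{} with the \maxdicutcut{} algorithm from \Cref{thm:main} to recover a large undirected cut, then invoke the \UGC-hardness of \maxcut. Your write-up is in fact more careful than the paper's (you explicitly verify that the optimum of \cutorient{} equals the maximum cut value and note that $A$ only outputs an orientation), but the underlying idea is identical.
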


\begin{proof}
The observation is that the $\maxdicutcut{}$ problem gives a reduction from \maxcut{} to \cutorient:
Given an instance of \maxcut{} that has a cut of size $\rho$, if we could orient
the graph to obtain a directed cut of size $\alpha\rho$, then by solving the resulting digraph as an instance of \maxdicutcut{} we would find a cut of size $\alpha\rho$. 
Since it is \UGC-hard to approximate $\maxcut$ with a factor better than $\alpha_{\text{GW}}$~\cite{KKMO07,Mossel10:ann},
the same is true for \cutorient{}.
\end{proof}

\noindent
Using ideas from the proof of~\Cref{thm:main}, we give an efficient algorithm for another problem.

\begin{definition}[$\maxdicutacyc$]
Given a directed graph $G$ that has a directed cut of value $\rho$, 
efficiently find an acyclic subgraph of $G$ with at least $\rho$ edges.
\end{definition}

\begin{restatable}{theorem}{thmMainTwo}\label{thm:main2}
There exists a polynomial-time algorithm for $\maxdicutacyc$.
\end{restatable}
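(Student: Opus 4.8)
My plan is to prove \Cref{thm:main2} by replaying the argument behind \Cref{thm:main}, with linear orders playing the role of the parity/``weakly satisfied'' structure. Applying \Cref{thm:main} as a black box (via $\maxdicutcut$) does not obviously suffice: passing to an \emph{undirected} cut discards the orientation, and recovering a single direction of it keeps only about half of the crossing edges in the worst case, so the argument has to be re-run while keeping the directions. First I would reformulate. A subgraph $H$ of a digraph $G$ is acyclic exactly when $H$ is the set of forward edges of some linear order $\pi$ of $V(G)$: a topological order of an acyclic $H$, extended arbitrarily to $V(G)$, makes every edge of $H$ forward, and the forward edges of any $\pi$ never contain a directed cycle. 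So ``find an acyclic subgraph with $\ge\rho$ edges'' is literally ``find a linear order of $V(G)$ with $\ge\rho$ forward edges'', while the promised directed cut of value $\rho$ --- a partition $(S, V\setminus S)$ with $\rho$ edges leaving $S$ --- is just a \emph{two-level} order ($S$ then $V\setminus S$, each block ordered arbitrarily) that already has $\ge\rho$ forward edges. Thus the goal becomes: given that some two-level order has $\ge\rho$ forward edges, efficiently find \emph{some} linear order with $\ge\rho$ forward edges. This is the exact analogue of $\maxandeven$, with edges in the role of clauses, linear orders in the role of assignments, the two-level order in the role of a strongly satisfying assignment, and ``forward edge'' in the role of ``weakly satisfied clause'': being a crossing edge of a two-level order is the rigid condition, and it forces ``forward'' for that order and for every refinement of it into more levels, just as strong satisfaction implies weak satisfaction.

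With this dictionary I would transport the algorithm from the proof of \Cref{thm:main} step by step: maintain a partition of $V(G)$ into consecutive levels (a coarse linear order) in place of a partial assignment; where that proof scans the clauses guided by the strong witness, scan the edges guided by the promised cut; and whatever linear-algebra moves the proof of \Cref{thm:main} makes on the ``weakly satisfied'' constraints (committing to a consistent subsystem, handling a dependency), replace them by their order-theoretic counterparts --- refining the current level structure so that the relevant edges turn forward and are locked in; taking a topological order of the acyclic auxiliary relation assembled so far; merging two consecutive levels. The counting step that in $\maxandeven$ certifies $\rho$ weakly satisfied clauses should then certify $\rho$ forward edges, i.e.\ an acyclic subgraph with $\ge\rho$ edges.

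The genuinely non-mechanical point, which I expect to be the hard part, is that a linear order is not a Boolean object: ``weakly satisfied'' is a single affine equation over $\mathbb{F}_2$, with no interaction between clauses beyond shared variables, whereas ``forward'' for an edge is a comparison of two vertex positions that must be globally transitive. So the order-theoretic substitutes above have to be chosen carefully, and the real content is checking that the auxiliary relation one builds along the way stays acyclic (so that a valid linear order exists at the end) while at least $\rho$ edges remain forward. The regime $\rho \le \lvert E(G)\rvert/2$ I would dispose of separately and trivially, since every digraph has an acyclic subgraph containing at least half of its edges --- take whichever of an arbitrary linear order and its reverse has more forward edges.
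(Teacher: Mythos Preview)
Your reformulation is correct and matches the paper's: an acyclic subgraph is exactly the set of forward edges of some linear order, and a directed cut is a two-level order. But the proposal never specifies an algorithm. You plan to ``transport'' the proof of \Cref{thm:main} by ``scanning the edges guided by the promised cut'' and replacing its ``linear-algebra moves'' by order-theoretic counterparts; however, the promised cut is not available to the algorithm (only its existence is promised), and the actual proof of \Cref{thm:main} performs no such scan and no substantive linear algebra over $\mathbb{F}_2$ --- its engine is the half-integral LP of \Cref{thm:solver}, followed by random rounding of the $0$-valued variables. With no concrete substitute for that LP step, you have nothing to transport, and the ``hard part'' you anticipate (keeping an iteratively built auxiliary relation acyclic) is a phantom: any total order you write down is automatically transitive, so the only question is how many edges are forward, not global consistency.

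The paper's argument is short and direct. Apply \Cref{thm:solver} to the $\maxdicut$ instance (clause $\{-x,y\}$ for each edge $(x,y)$) to get $\tilde{c}:V\to\{-1,0,+1\}$ of value at least $\rho$, which yields three levels $V_{-1},V_0,V_1$. A table check gives
\[
\rho\ \le\ \#E_{-1,1}+\tfrac12\#E_{-1,0}+\tfrac12\#E_{0,1}+\tfrac12\#E_{0,0}\ \le\ \#\{(x,y)\in E:\tilde{c}(x)<\tilde{c}(y)\}+\tfrac12\#E_{0,0}.
\]
Every edge with $\tilde{c}(x)<\tilde{c}(y)$ is forward in any order respecting the three levels, and the $E_{0,0}$ edges are handled by precisely the trick you set aside for the ``trivial regime'': take an arbitrary order $\pi_0$ of $V_0$ and its reverse $\pi_0'$, and return whichever of $(\pi_{-1},\pi_0,\pi_1)$ and $(\pi_{-1},\pi_0',\pi_1)$ has more forward edges. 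That reversal trick \emph{is} the rounding step here, not a separate throwaway case.
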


\paragraph{Related work.}
The motivation for our work comes from a systematic study of so-called promise constraint satisfaction problems (PCSPs)~\cite{BG21:sicomp,BBKO21}. These are problems concerned with homomorphisms between graphs and, more generally, relational structures. A homomorphism $h$ from a graph $G$ to a graph $H$, also known as an $H$-colouring of $G$~\cite{Hell90:h-coloring}, is a map from the vertex set of $G$ to the vertex set of $H$ that preserves edges; i.e., if $(u,v)$ is an edge in $G$ then $(h(u),h(v))$ must be an edge in $H$. For instance, if $H=K_3$ is a clique on 3 vertices, then homomorphisms from $G$ to $H$ are precisely 3-colourings of $G$. Equivalently, there is a homomorphism from $G$ to $H$ if $G$ is a subgraph of a blow-up of $H$, where a blow-up of $H$ replaces every vertex by an independent set and every edge by a complete bipartite graph. Homomorphisms between relational structures are defined analogously as maps from the universe of one structure to the universe of another structure so that all relations are preserved by a component-wise application of the map. 

Our work is related to an optimisation variant of PCSPs.
In particular, the problem $\maxPCSP(G,H)$ asks: Given an input structure $X$ which admits a $G$-colouring of value $\rho$, find an $H$-colouring of value at least $\rho$. For example, $\maxPCSP(G,H)$ with bipartite $G$ was classified~\cite{nz25:icalp} and the intractability of non-bipartite $G$ was very recently established in~\cite{nz25:arxiv-dichotomy}. This leaves open cases where $G$ and $H$ are not graphs but rather arbitrary relational structures. The simplest open case is Boolean binary structures, i.e., digraphs on 2 vertices. There are three interesting problems: $\maxcut$, $\maxdicut$, and $\maxdicutcut$. Since $\maxcut{}$ and $\maxdicut{}$ are already well-understood, understanding the complexity of 
$\maxdicutcut$ was the first step in the ultimate goal of understanding all
structures beyond (undirected) graphs. Thus, we view the importance
of~\Cref{thm:main} as twofold. Firstly, as a non-trivial tractability result for
a natural problem. Secondly, as initiating the study of $\maxPCSP$ beyond
graphs. Finally, \Cref{thm:main2} gives a tractability result for
$\maxPCSP({A},{B})$, 
where ${A}=(\{0,1\},\{(0,1)\})$ represents the cut problem in directed graphs and ${B}=(\mathbb{Q},<)$ represents the graph acyclicity problem, thus identifying a natural tractable $\maxPCSP$ with an infinite structure, following a well-established line of work on infinite-domain CSPs~\cite{Bodirsky2021:book}.

\section{\texorpdfstring{Algorithm from~\Cref{thm:main}}{Algorithm from Theorem 3}}\label{sec:alg1}

We denote by $[k]$ the set $\{1,\ldots,k\}$.
For a statement $\phi$, we let $[\phi]$ be 1 if $\phi$ is true and $0$ otherwise. 
For a clause $C = \{ s_1 x_1, \ldots, s_k x_k \}$, weak and strong satisfaction
of $C$ by an assignment $c$ can be expressed as
\begin{align*}
    [ \text{$C$ is strongly satisfied} ] &= \frac{1}{2} + \frac{1}{2} \min_{1 \leq i \leq k} s_i c(x_i), \\
    [ \text{$C$ is weakly satisfied} ] &= \frac{1}{2} + \frac{1}{2} \prod_{1 \leq i \leq k} s_i c(x_i).
\end{align*}

Thus, a clause $C$ is strongly satisfied if and only if the minimum of
all the literals in that clause is $1$, and this minimum is $-1$ otherwise.
Hence, we can cast \maxand{} over an instance with variable set $V$, and clauses
$C_1, \ldots, C_m$ with $C_i = \{ s_1^i x_1^i, \ldots, s^i_{k_i} x^i_{k_i} \}$, as the problem of finding a solution to
\begin{equation}\label{eq:thing}
\max_{c : V \to D} \sum_{i = 1}^m \frac{1}{2} + \frac{1}{2} \min_{1 \leq j \leq
  k_i} s^i_j c(x^i_j),
\end{equation}
where we take $D = \{-1,+1\}$. One way to establish~\Cref{thm:main} would be to
first solve this problem relaxed to $D = [-1, 1]$ using LP, and then round directly. 
We take a slightly different (and simpler) route, which relies on the idea of ``half-integrality'' and will be also useful for
proving~\Cref{thm:main2}:~\eqref{eq:thing} can be solved exactly if we take $D = \{-1, 0, +1\}$.

\begin{theorem}\label{thm:solver}
    An optimal solution to~\eqref{eq:thing} can be found in polynomial time for $D = \{-1, 0, +1\}$.
\end{theorem}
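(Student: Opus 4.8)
The plan is to show that the maximisation in~\eqref{eq:thing} over $D=\{-1,0,+1\}$ can be recast as a problem solvable by linear programming with an integral (in fact, half-integral after rescaling) optimum. First I would introduce, for each variable $x\in V$, a real variable $d_x\in[-1,1]$, and for each clause $C_i$ a real variable $t_i\in[-1,1]$ representing the value of $\min_j s^i_j c(x^i_j)$; the objective becomes $\sum_i \tfrac12 + \tfrac12 t_i$. The minimum is captured by the linear constraints $t_i \le s^i_j d_{x^i_j}$ for every literal $s^i_j x^i_j$ of $C_i$ (note that if $s^i_j=-1$ this reads $t_i \le -d_{x^i_j}$, still linear). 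Since the objective is increasing in each $t_i$, at an optimum each $t_i$ equals $\min_j s^i_j d_{x^i_j}$, so this LP has the same optimum as the relaxation of~\eqref{eq:thing} to $D=[-1,1]$.

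The key step is then to argue that this LP has a half-integral optimal vertex, i.e.\ one in which every $d_x\in\{-1,-\tfrac12,0,\tfrac12,1\}$ — or, more to the point, that rounding an optimal LP solution to $\{-1,0,+1\}$ loses nothing. The cleanest route I would take is to observe that the constraint matrix, after substituting out the $t_i$ (i.e.\ writing the objective directly as $\sum_i \tfrac12 + \tfrac12 \min_j s^i_j d_{x^i_j}$ and maximising the concave piecewise-linear function over the box $[-1,1]^V$), has the structure of a \emph{dual of a min-cost-flow / interval} type, or more simply: the problem $\max \sum_i \min_j \pm d_{x^i_j}$ over $[-1,1]^V$ is equivalent to a shortest-path / potential problem. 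Concretely, I would sort the distinct values taken by the $d_x$; by a standard ``uncrossing'' or exchange argument, one shows that from any optimal solution one can move all coordinates that lie strictly between two consecutive ``thresholds'' to one of finitely many canonical levels without decreasing the objective — because the objective, restricted to the line obtained by varying a single coordinate (or a group of coordinates tied at the same value), is piecewise linear and concave, hence attains its max at a breakpoint, and all breakpoints of the relevant one-dimensional restrictions lie in $\{-1,0,1\}$. Iterating this rounds every coordinate into $\{-1,0,+1\}$ while preserving optimality, which proves that the LP optimum is attained in $D=\{-1,0,+1\}$ and hence equals~\eqref{eq:thing} over that domain; since LPs are solvable in polynomial time and the rounding is a simple combinatorial post-processing, we are done.

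The main obstacle I anticipate is making the half-integrality / rounding argument fully rigorous and polynomial-time, rather than hand-wavy: the naive ``move one coordinate at a time to a breakpoint'' can interact badly when many variables are tied, and one must be careful that fixing one coordinate to an endpoint of $[-1,1]$ and then re-optimising still leaves a problem of the same form (it does, since fixing $d_x=\pm1$ just deletes $x$ and updates the clauses). An alternative, possibly slicker, route that sidesteps the combinatorial rounding is to prove total unimodularity (or a suitable ``total half-unimodularity'') of the constraint matrix of the $(d_x,t_i)$ LP: each inequality $t_i \mp d_{x^i_j}\le 0$ has exactly one $\pm1$ from a $d$-variable and one $-1$ from a $t$-variable, plus the box constraints $-1\le d_x\le 1$; after the change of variables $d_x = 2\tilde d_x - 1$ or similar, this is the incidence structure of a bipartite-like graph, so the constraint matrix is an interval/network matrix up to sign changes and is totally unimodular, forcing integral vertices in the appropriately scaled polytope — which then translate to values in $\{-1,0,+1\}$. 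I would present whichever of the two is cleaner; the box-LP with TU matrix feels like the lower-risk option, but the explicit breakpoint argument is more self-contained and reusable for~\Cref{thm:main2}.
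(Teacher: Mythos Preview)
Your LP formulation (auxiliary variables $t_i$ with $t_i\le s^i_j d_{x^i_j}$) matches the paper exactly. The gap is in the rounding, and both routes you propose have genuine problems. The ``lower-risk'' TU route is simply wrong: the constraint matrix is not totally unimodular. For the \maxdicut{} instance given by a directed triangle $1\to 2\to 3\to 1$, the six constraints $t_i+d_u\le 0$, $t_i-d_v\le 0$ (one pair per edge $(u,v)$) form a $6\times 6$ matrix over the columns $t_1,t_2,t_3,d_1,d_2,d_3$ with determinant $\pm 2$; the substitution $d_x=2\tilde d_x-1$ only makes things worse by introducing $\pm 2$ coefficients. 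So TU cannot be the mechanism behind the $\{-1,0,1\}$ optimum. Your breakpoint route is closer in spirit, but its key claim is false: when you slide a single $d_x$ (or a tied block) along a line, the breakpoints are where $s\,d_x$ crosses another literal $s'd_{x'}$ in a shared clause, i.e.\ at $d_x=\pm d_{x'}$, \emph{not} at $\{-1,0,1\}$. A single move therefore only merges two blocks, and you still owe a termination argument showing the process ends in $\{-1,0,1\}$ rather than stalling at some other configuration.

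The paper's proof is essentially your route~(a) with one extra idea that dissolves the breakpoint issue entirely. After solving the LP, \emph{flip literal signs} so that the LP optimum $c^*$ satisfies $c^*(x)\ge 0$ for every $x$, and relabel so that $0\le c^*(1)\le\cdots\le c^*(n)$. On the whole region $\{c:0\le c(1)\le\cdots\le c(n)\}$ the $\arg\min$ in each clause is determined by the sign pattern and the variable ordering alone, independently of the actual values (it is the smallest-index variable if all signs in the clause are positive, and otherwise the largest-index variable among the negative literals). Hence $F$ is a \emph{single} affine function on this entire region --- no interior breakpoints at all. Now take the lowest tied block with value strictly between $0$ and the next block; affinity plus optimality force its total coefficient in $F$ to be zero, so the whole block can be set to $0$ without changing the objective. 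Iterating collapses every value to $\{0,1\}$, i.e.\ to $\{-1,0,1\}$ after undoing the sign flips.
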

\begin{proof}
    First, find any optimal solution $c^*$ to~\eqref{eq:thing} with $D = [-1, 1]$ by LP.\@ To do this, one can use a standard trick. For each clause $C_i = \{ s_1^i x_1^i, \ldots, s_{k_i}^i x_{k_i}^i \}$ for $i \in [m]$, introduce a variable $t_i$ and constraints $t_i \leq s_j^i c(x_j^i)$ for $j \in [k_i]$. Next, replace the objective function by $\sum_{i = 1}^m \frac{1}{2} + \frac{1}{2} t_i$. Observe that in any feasible solution to this LP, we have $t_i \leq \min_{1 \leq j \leq k_i} s_j^i c(x_j^i)$. Furthermore, since the objective is an increasing function of $t_1, \ldots, t_m$, in any optimal solution to this LP we have $t_i = \min_{1 \leq j \leq k_i} s_j^i c(x_j^i)$. Hence the optimal solutions to this LP precisely correspond to the optimal solutions to~\eqref{eq:thing}.
    
    We will shift the solution while keeping it optimal until the image of $c^*$ is contained in $\{-1, 0, 1\}$. As a pre-processing step, flip signs of literals so that $c^*(x) \geq 0$ for $x \in V$. Our goal now is to have the image of $c^*$ contained in $\{0, 1\}$. Suppose without loss of generality that $V = \{ 1, \ldots, n \}$ and $c^*(1) \leq \cdots \leq c^*(n)$. Define
    \[
    F(c) = \sum_{i = 1}^m \frac{1}{2} + \frac{1}{2} \min_{1 \leq j \leq k_i} s_j^i c(x_j^i).
    \]

    Note that for every $c$ for which $0 \leq c(1) \leq \cdots \leq c(n)$, $\arg \min_{1 \leq j \leq k_i} s_j^i c(x_j^i)$ is known. 
    Indeed, if all $s_j^i$ are positive then the minimum is attained at the
    smallest $x_j^i$; whereas if there is at least one negative $s_j^i$ then the
    minimum is attained at the largest $x_j^i$ among those with $s_j^i=-1$.
    (Here we compare variables by the natural ordering, since we have assumed the variables are natural numbers.) Let $j(i) = \arg \min_{1 \leq j \leq k_i} s_j^i c(x_j^i)$ be this minimum; then, for all $c$ where $0 \leq c(1) \leq \cdots \leq c(n)$,
    \[
    F(c) = \sum_{i = 1}^m \frac{1}{2} + \frac{1}{2} s_{j(i)}^i c(x_{j(i)}^i),
    \]
    In other words, $F$ is an affine function in terms of $c$! Define $c^*(0) = 0, c^*(n + 1) = 1$. While $c^*$ has an image that is not 0 or 1, do the following. Take $1 \leq a \leq b \leq n$ so that $0 = c^*(0) = \cdots = c^*(a -1) < c^*(a) = \cdots = c^*(b) < c^*(b + 1) \leq \cdots \leq c^*(n + 1) = 1$. Such $a, b$ exist, or else $c^*(x) \in \{0, 1\}$ for all $x \in [n]$.
    If the sum of the coefficients of $c^*(a), \ldots, c^*(b)$ in $F$ (seen as an affine function) is negative, then we could decrease all of these values together (since this maintains the fact that $0 \leq c^*(1) \leq \cdots \leq c^*(n)$) and improve our solution, which is impossible. If the sum of the coefficients is positive, then we could improve our solution by increasing the values of $c^*(a), \ldots, c^*(b)$, which is impossible. So we can assume the sum of the coefficients is 0, in which case we can set the values $c^*(a), \ldots, c^*(b)$ to anything we want in the interval $[0, c^*(b + 1)]$ without changing the value of the solution. So set $c^*(a) = \cdots = c^*(b) = 0$. Continue this procedure until there are no variables set to anything other than  0 or 1.
\end{proof}

We now prove~\Cref{thm:main}, restated below for the reader's convenience. 

\thmMain*

\begin{proof}
Without loss of generality, we can assume that each variable appears in each clause at most once. Indeed, if a clause $C$ contains a variable $x$ and its negation $-x$ then $C$ cannot be strongly satisfied but could be weakly satisfied, so running our algorithm on the instance without $C$ causes no issues. Similarly, if a literal appears twice in a clause then both occurrences can be removed, as this does not affect weak satisfiability (but could improve strong satisfiability). 

Suppose we are given an instance of \maxandeven{}, namely an expression of form
\[
    \max_{c : V \to \{ \pm 1 \}} \sum_{i = 1}^m \frac{1}{2} + \frac{1}{2} \min_{1 \leq j \leq k_i} s^i_j c(x^i_j).
\]
Suppose that the value of this expression is $\rho$. Our goal is to find a
  function $c^*$ such that that number of weakly satisfied clauses is at least $\rho$. Recalling that a clause $\{ s_1 x_1, \ldots, s_k x_k \}$ is weakly satisfied by $c$ if and only if $\frac{1}{2} + \frac{1}{2} \prod_i s_i c(x_i) = 1$, and that this expression is 0 otherwise, we note that we must find $c^* : V \to \{-1, +1\}$ such that
\begin{equation}\label{eq:goal}
    \rho \leq \sum_{i = 1}^m \frac{1}{2} + \frac{1}{2} \prod_{1 \leq j \leq k_i} s^i_j c^*(x^i_j).
\end{equation}

\noindent
Using~\Cref{thm:solver}, we can efficiently find a function $\tilde{c} : V \to \{-1, 0, +1\}$ such that

\[
    \rho \leq \sum_{i = 1}^m \frac{1}{2} + \frac{1}{2} \min_{1 \leq j \leq k_i} s^i_j \tilde{c}(x^i_j).
\]
We claim that the following choice of $c^*$ makes~\eqref{eq:goal} hold in
  expectation: if $\tilde{c}(v) = \pm 1$ then set $c^*(v) = \tilde{c}(v)$.
  Otherwise set $c^*(v)$ equal to $+1$ or $-1$ uniformly and independently at
  random. Indeed, by linearity of expectation it suffices to show that, for any clause $\{ s_1 x_1, \ldots, s_k x_k \}$, we have
\[
\frac{1}{2} + \frac{1}{2}\min_{1 \leq i \leq k} s_i \tilde{c}(x_i)
\leq
\Exp\left[
\frac{1}{2} +\frac{1}{2} \prod_{1 \leq i \leq k} s_i c^*(x_i).
\right]
\]
In particular, let us consider the value of the minimum on the left. If it is
  $-1$ there is nothing left to prove. If it is $+1$, then $\tilde{c}(x_i) \neq
  0$ and hence $c^*(x_i) = \tilde{c}(x_i)$ for all $x_i$, and the bound holds
  with equality. Finally, if the minimum is 0, then there exists some $x_i$ such
  that $\tilde{c}(x_i) = 0$; hence $c^*(x_i)$ is set to $+1$ or $-1$ equiprobably. It is easy then to see that the product on the right hand side is either $+1$ or $-1$ equiprobably, whence the conclusion. (This depends, essentially, on the fact that every variable appears in each clause at most once.)

We now derandomise the rounding scheme above; in other words, we will show how
to deterministically set the variables so that the resulting
  value is at least as good as the expectation of the random variables. 
Equivalently, given a set of parity constraints on
subsets of variables, the goal is to find an assignment that satisfies at least as
many constraints as the random assignment. This problem can be derandomised easily 
using the method of conditional expectation --- indeed these techniques have appeared before in e.g.~\cite{KSTW00}, but we include them for completeness.
Recall that, by assumption,
each variable appears in each clause at most once. Thus, conditional on some
subset of variables being fixed, we expect half of the remaining parity constraints
that have at least one unfixed variable within them to be satisfied.
Thus, we can derandomise by going through the variables one by one; when we
set a variable $x$, we set it so that as many as possible of the parity constraints where $x$
is the last unfixed variable remaining are satisfied.
\end{proof}

\section{\texorpdfstring{Algorithm from~\Cref{thm:main2}}{Algorithm from Theorem 7}}

\noindent
We are now ready to prove~\Cref{thm:main2}, restated below.

\thmMainTwo*

\begin{proof}
    Equivalently, given a digraph $G$ that has a dicut with $\rho$ edges, we will produce an ordering of the vertices of $G$ such that at least $\rho$ edges are oriented according to the ordering --- removing all the edges that are not oriented in this way gives us the required subgraph. For some ordering $\sigma$, we say that an edge $(x, y)$ is well ordered by $\sigma$ if $x$ comes before $y$ in $\sigma$.

    As a pre-processing step, remove all loops from the input digraph. This will never lower the size of the maximum dicut, nor will it lower the size of the outputted subgraph. Now compute a solution $\tilde{c}$ to~\eqref{eq:thing} for $G = (V, E)$ over $D = \{-1, 0, 1\}$, using \Cref{thm:solver}. By this, we mean a solution to the instance including the clause $\{-x, y\}$ once for each occurrence of the edge $(x, y)$ in $G$.
    In other words, each edge $(x, y)$ contributes $\frac{1}{2} + \frac{1}{2} \min(-c(x), c(y))$ to~\eqref{eq:thing}. The values of this expression over $D = \{-1, 0, 1\}$ are tabulated in~\Cref{fig:expression}.

\begin{figure}[!h]
\centering
\begin{tabular}{c|ccc}
     \diagbox{${c}(x)$}{${c}(y)$} & $-1$ & $0$ & $1$ \\ \hline
     $-1$ & $0$ & $\frac{1}{2}$ & $1$ \\
     $0$ & $0$ & $\frac{1}{2}$ & $\frac{1}{2}$ \\
     $1$ & $0$ & $0$ & $0$
\end{tabular}
\caption{Value of $\frac{1}{2} + \frac{1}{2} \min(-c(x), c(y))$}\label{fig:expression}
\end{figure}
  
    Since the digraph admits a dicut with $\rho$ edges, the value of $\tilde{c}$ is at least $\rho$. Partition the vertices of $G$ into $V_{-1}, V_0, V_1$ according to their image through $\tilde{c}$. Let $E_{ij} = E \cap (V_i \times V_j)$, for $i, j \in \{-1, 0, 1\}$. Let $\pi_{-1}, \pi_0, \pi_1$ be arbitrary orderings of $V_{-1}, V_0, V_1$, and let $\pi_0'$ be the reverse of $\pi_0$. We claim that one of the orderings $\sigma = (\pi_{-1}, \pi_0, \pi_1)$ or $\sigma' = (\pi_{-1}, \pi_0', \pi_1)$ can be returned.

    To see why, note that every edge $(x, y)$ with $\tilde{c}(x) < \tilde{c}(y)$ is well ordered by both $\sigma$ and $\sigma'$. Furthermore, at least half the edges $(x, y)$ with $\tilde{c}(x) = \tilde{c}(y) = 0$ are ordered correctly in one of $\sigma$ or $\sigma'$ (this is why removing loops is essential). But (cf.~\Cref{fig:expression} and the optimality of $\tilde{c}$),
    \begin{multline*}
    \rho
    \leq \#E_{-1, 1}
    +\frac{1}{2} \# E_{-1, 0}
    +\frac{1}{2} \#E_{0, 1}
    +\frac{1}{2} \#E_{0, 0}
    \leq \#E_{-1, 1}
    + \#E_{-1, 0}
    + \#E_{0, 1}
    +\frac{1}{2} \#E_{0, 0} \\
    = \# \{ (x, y) \in E \mid \tilde{c}(x) < \tilde{c}(y) \}
    +
    \frac{1}{2} \# E_{0, 0}.
    \end{multline*}
    Hence, at least one of $\sigma$ and $\sigma'$ well orders at least $\rho$ edges.
\end{proof}

\section{Conclusions}

As discussed briefly in~\Cref{sec:intro}, the main contribution of this paper is
twofold. Firstly, we give a simple, efficient algorithm for a very natural
computational problem. Secondly, we initiate the study of 
$\maxPCSP$s beyond graphs (which have been recently classified~\cite{nz25:arxiv-dichotomy}) and beyond finite structures. A
concrete question worthy of investigating is for which $\maxPCSP$s our method of
rounding, relying on the idea of half-integrality, is applicable.

\paragraph{Acknowledgements.}
We thank anonymous reviewers of an earlier version of this paper for useful
comments and a simplification of our algorithm. We also thank the anonymous
reviewers of APPROX 2025 for their feedback and suggestions for changes.

{%
\bibliography{nz}
\bibliographystyle{alphaurl}
}

\end{document}